\newtheorem{thm}{Theorem}
\theoremstyle{remark}
\begin{document}
%
\title{Optimal Distribution Design for Irregular Repetition Slotted ALOHA with Multi-Packet Reception
}
%
%
%

\author{Zhengchuan Chen, \IEEEmembership{Member, IEEE,} Yifan Feng, \IEEEmembership{Student Member, IEEE,} \\
Chundie Feng, \IEEEmembership{Student Member, IEEE,} Liang Liang, \IEEEmembership{Member, IEEE,}\\
Yunjian Jia, \IEEEmembership{Member, IEEE,} and Tony Q. S. Quek, \IEEEmembership{Fellow, IEEE}
\thanks{Z. Chen, Y. Feng, C. Feng, L. Liang and Y. Jia are with the College of Microelectronics and Communication Engineering, Chongqing University, Chongqing 400044, China (email: \{czc, fengyf, fcd, liangliang, yunjian\}@cqu.edu.cn).}
\thanks{T. Q. S. Quek is with the Information Systems Technology and Design Pillar, SUTD, Singapore 487372 (email: tonyquek@sutd.edu.sg).}}

\maketitle

\begin{abstract}
Associated with multi-packet reception at the access point, irregular repetition slotted ALOHA (IRSA) holds a great potential in improving the access capacity of massive machine type communication systems. Considering the time-frequency resource efficiency, $K = 2$ (multi-packet reception capability) may be the most suitable scheme for scenarios that allow smaller resource efficiency in exchange for greater throughput. In this paper, we analytically derive an optimal transmission probability distribution for IRSA with $K = 2$, which achieves a significant higher load threshold than the existing benchmark distributions. In addition, the energy efficiency optimization in terms of the maximum repetition rate is also presented.
\end{abstract}

\begin{IEEEkeywords}
Irregular repetition slotted ALOHA, successive interference cancellation, multi-packet reception, transmission probability distribution, energy efficiency optimization.
\end{IEEEkeywords}

%
\IEEEpeerreviewmaketitle

\section{Introduction}
\allowdisplaybreaks
\IEEEPARstart{D}{ue} to the unique advantage of random access schemes in terms of signaling consumption, ALOHA-type protocols are considered to be a class of promising access technologies for massive machine type communications (mMTC) systems. In recent years, a number of fruitful works have gradually increases the throughput to $1$ packet$/$slot in the evolution direction of slotted ALOHA (SA) \cite{SA,DSA,CRDSA,CRDSA++,IRSA,CSA}. Among them, contention resolution diversity slotted ALOHA (CRDSA) \cite{CRDSA} was the first to introduce successive interference cancelation (SIC) to make use of collisions, instead of directly discarding collision packets like the previous schemes. While allowing each active user to transmit two packet replicas in random slots of the MAC frame, the receiver (access point) in CRDSA scheme iteratively decodes the packets in singleton slots and removes the corresponding replicas from other slots. The resulting time diversity gain helps the throughput reach $0.55$ packet$/$slot (while the value is $1/e$ for SA).

In the subsequent irregular repetition slotted ALOHA (IRSA) scheme \cite{IRSA}, the number of replicas transmitted by each user was designed to depend on a pre-designed probability distribution instead of being fixed to two. In particular, the asymptotic throughput can approach the theoretical upper bound ($1$ packet$/$slot) for ALOHA-type protocols by using an excellent distribution as given by \cite{K=1}. Based on that, coded slotted ALOHA (CSA) \cite{CSA} further increased the rate (valid symbols$/$total symbols) limit from $1/2$ of IRSA to $1$ by segmenting and encoding the packets to be transmitted.

A basic assumption of the above schemes is that the receiver can only decode one packet in a slot at most. With support of physical layer technologies such as power capture and multi-antenna transmission, further throughput improvements can be achieved by considering multi-packet reception (MPR) channels, which has actually been introduced into IRSA and CSA \cite{Mulit3,Mulit4,Mulit5,Mulit1,Mulit2}. The receiver in a $K$-MPR model is assumed to be able to successfully decode all packets in a slot containing no more than $K$ packets. Assuming that the throughput gain comes from a $K$-fold time-frequency resource consumption, \cite{Mulit1} and \cite{Mulit2} both studied the trend of $G^*/K$ (called the normalized load threshold) with respect to $K$. Although contrary to the corresponding conclusion for CSA in \cite{Mulit2}, it is shown in \cite{Mulit1} that $G^*/K$ decreases with $K$ in the asymptotic analysis for IRSA. Based on IRSA scheme, $K = 2$ is obviously the optimal choice for the cases that allow a small penalty in resource efficiency to obtain a significant throughput gain.

Unlike most existing transmission probability distributions that are obtained by search algorithms, \cite{K=1} uses a traceable method to provide an optimal distribution for the conventional IRSA scheme (i.e., $K = 1$). In this paper, we analytically derive an optimal transmission distribution for IRSA with $K = 2$ from the perspective of physical meaning. For $K \ge 3$, some potentially enlightening discussions for finding the corresponding distributions are provided. It is worth mentioning that the derived optimal distribution is actually a discrete function about the maximum repetition rate (i.e., the maximum number of the packet replicas). In order to characterize the impact of the maximum repetition rate more comprehensively, we also present the energy efficiency optimization.

The rest of this paper is organized as follows. In Section \uppercase\expandafter{\romannumeral2}, we present the system model. The derivation of the optimal distribution for IRSA with $K=2$ and the discussions for the case of $K \ge 3$ are given in Section \uppercase\expandafter{\romannumeral3}. Section \uppercase\expandafter{\romannumeral4} optimizes the energy efficiency in terms of the maximum repetition rate for the derived distribution. Based on these, numerical results are provided in Section \uppercase\expandafter{\romannumeral5}. Section \uppercase\expandafter{\romannumeral6} concludes the paper.

\section{System Model and Preliminaries}
\allowdisplaybreaks
\begin{figure*}
\centering
\includegraphics[width=0.8\textwidth]{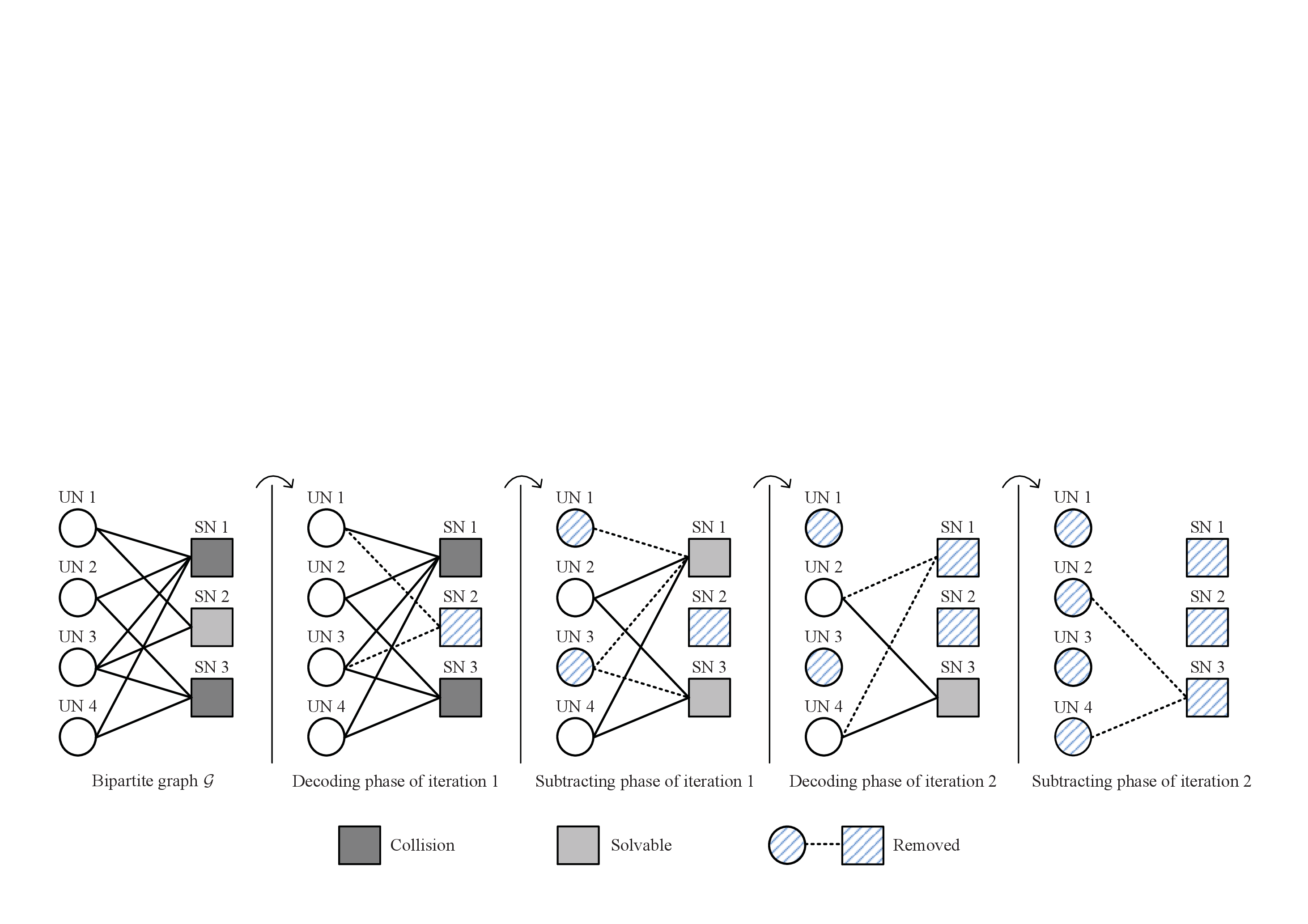}
\caption{An example of SIC with $K=2$. In the decoding phase of iteration 1, SN 2 can be removed. Then all edges emanating from UN 1 and UN 3 are removed in the subtracting phase of iteration 1. Repeat the above steps, and finally all users are successfully decoded after iteration 2.}
\end{figure*}

Following the IRSA model in \cite{IRSA} and \cite{Mulit1}, we consider that there are $M$ active users who want to transmit their packets to the same receiver. Assume that one MAC frame is composed of $N$ time slots and the transmission time of each packet is equal to the duration of a slot. In order to use SIC against collisions, each user transmits multiple repeated packets (i.e., replicas) according to the same transmission probability distribution per frame. More precisely, each user randomly selects a number of slots from a frame where the number is generated according to the pre-designated distribution to transmit this number of replicas. Note that in a slot, there is at most one packet belonging to the same user. At the receiver, we assume that when the number of collision packets in a slot does not exceed $K$, the packets can be decoded. The traffic load is defined as $G = M/N$. If all users can be successfully decoded, we consider the throughput $T = G$.

To facilitate analysis, a bipartite graph $\mathcal{G} = (V, Y, Z)$ is introduced to represent IRSA scheme in \cite{IRSA}, where the set $V$ denotes $M$ user nodes (UNs), $Y$ denotes $N$ slot nodes (SNs) and $Z$ denotes edges. An edge connecting UN 1 and SN 3 can be abstractly understood as a replica of user 1 transmitted in slot 3. During SIC, whenever a packet is decoded, we remove the corresponding edge and the UN connected to the edge (including all edges emanating from this UN) in graph $\mathcal{G}$. Fig. 1 shows an example of SIC, where the numbers of UNs and SNs are $4$ and $3$, respectively.

The number of edges connected to a node is called node degree. Furthermore, the UN degree distribution and SN degree distribution are shown respectively by $\Lambda (x) = \sum\nolimits_r {{\Lambda _r}{x^r}}$ and $\Psi (x) = \sum\nolimits_l {{\Lambda _l}{x^l}}$, where $\Lambda _r$ denotes the probability that the degree of each user node is $r$ and $\Psi _l$ denotes the probability that the degree of each slot node is $l$. It is natural to know that UN degree distribution $\Lambda (x)$ is equivalent to the transmission distribution of users, which totally depends on the system designer's setting and further determines SN degree distribution $\Psi (x)$. In addition, the degree distributions can be defined from an edge perspective as $\lambda (x)  = \sum\nolimits_r {{\lambda _r}{x^{r-1}}}$ and $\rho (x) = \sum\nolimits_l {{\rho _l}{x^{l-1}}}$, where $\lambda _r$ and $\rho _l$ are the probabilities that an edge is connected to a degree-$r$ UN and a degree-$l$ SN, respectively. According to \cite{IRSA}, for $M,N \to \infty$, we have
\begin{align}
\label{edge_distribution}
	\lambda \left( x \right) = \frac{{\Lambda '\left( x \right)}}{{\Lambda '\left( 1 \right)}}\ {\text{and }}\rho \left( x \right) = \frac{{\Psi '\left( x \right)}}{{\Psi '\left( 1 \right)}} = {e^{ - G\Lambda '\left( 1 \right)\left( {1 - x} \right)}}.
\end{align}

Each iteration of SIC can be divided into the decoding phase and the subtracting phase. Furthermore, let $p_i$ and $q_i$ be the average probabilities that a randomly selected edge can not be removed after the decoding phase and subtracting phase at the $i$-th iteration, respectively. For a degree-$r$ UN, an existing edge can be removed in the subtracting phase of the $i$-th iteration if any of the other $r-1$ edges has been removed after the decoding phase of the $(i-1)$-th iteration. Thus, we have
\begin{align}
\label{q_p}
	{q_i} = \sum\limits_r {{\lambda _r}p_{i - 1}^{r - 1}}  = \lambda \left( {{p_{i - 1}}} \right).
\end{align}
For a degree-$l$ SN, an existing edge can be removed in the decoding phase of the $i$-th iteration if there are no more than $K$ edges connected to this SN after the previous phase. Based on \cite{Mulit1}, we have
\begin{align}
\label{p_q}
	{p_i}
&=  \sum\limits_l {{\rho _l}} \left( {1 - \sum\limits_{k = 0}^{\text {min}(K,l)-1} {\dbinom{l-1}{k}}{q_i^k{{\left( {1 - {q_i}} \right)}^{l - k - 1}}} } \right) \notag\\
	&=  1 - \sum\limits_{k = 0}^{K-1} {\frac{{q_i^k{\rho ^{\left( k \right)}}\left( {1 - {q_i}} \right)}}{{k!}}}.
\end{align}
By substituting (\ref{edge_distribution}) and (\ref{q_p}) into (\ref{p_q}), it can be seen that
\begin{align}
\label{p_p}
	{p_i} = 1 - \exp \left( { - G\Lambda '\left( {{p_{i - 1}}} \right)} \right)\sum\limits_{k = 0}^{K - 1} {\frac{{{{\left( {G\Lambda '\left( {{p_{i - 1}}} \right)} \right)}^k}}}{{k!}}}.
\end{align}	

\section{Optimal Distribution Design}
\allowdisplaybreaks
It is clear that SIC stops when $p_i=p_{i-1}$. By substituting $p_i=p_{i-1}$ into (\ref{p_p}) and omitting the subscript of $p_{i-1}$, we
have
\begin{align}
\label{SICstop_K}
p = 1 - \exp \left( { - G\Lambda '\left( p \right)} \right)\sum\limits_{k = 0}^{K - 1} {\frac{{{{\left( {G\Lambda '\left( p \right)} \right)}^k}}}{{k!}}}.
\end{align}
If (\ref{SICstop_K}) has no real root for $0<p\le1$, the probability that a randomly selected packet cannot be successfully decoded after SIC converges to $0$, which is the basic condition that an optimal distribution is expected to meet.

\subsection{Optimal Distribution for the Case of $K=2$}
By substituting $K=2$ into (\ref{SICstop_K}), we obtain
\begin{align}
\label{SICstop}
G\Lambda '(p) - \ln \left( {1 + G\Lambda '(p)} \right) =  - \ln \left( {1 - p} \right).
\end{align}
Let us present the functional form of (\ref{SICstop}) as
\begin{align}
\label{f_p}
f(p) = G\Lambda '(p) - \ln \left( {1 + G\Lambda '(p)} \right) + \ln \left( {1 - p} \right).
\end{align}
Then, the condition to be satisfied is equivalent to that $f\left(p\right)$ has no zero point for $0<p\le1$. In other words, the derived optimal UN degree distribution should ensure that $f\left(p\right) < 0$ (or $f\left(p\right) > 0$) always holds for $0 < p \le 1$. The exponential function $g\left( p \right) = \exp \left( {ap} \right) - 1$ is selected as the approximation object of $G{\Lambda '\left( p \right)}$. Note that $a > 0$ is an adjustable positive parameter. By replacing $G{\Lambda '\left( p \right)}$ in (\ref{f_p}) with $g\left( p \right)$, we have
\begin{align}
	\label{tilde_f}
\tilde f\left(p\right) := {e^{ap}} - ap + \ln \left( {1 - p} \right) - 1,
\end{align}
where the assignment of $a$ is expected to ensure that $\tilde f\left( p \right) < 0$ always holds for $0<p\le1$. Since it can not be realized within the acceptable value range of $a$, we do not consider the case that $\tilde f\left( p \right) > 0$ always holds.

\begin{table}[!t]
	\centering
	\begin{tabular}{l}
		\toprule
		\bf{Algorithm 1:} Find optimal parameter ${a^*}$\\
		\midrule
		{\bf{Input:}} $0<p<1$, the value range of the unresolved probability for SIC\\ process; $a=0$, the initial value of the parameter; $\varepsilon=0.1$ and $\varepsilon^*$, the \\initial and objective values of parameter precision, respectively.  \\
				
		{\bf{Output:}} Optimal parameter ${a^*}$.\\
		{\bf{1:}} $a=a+\varepsilon$, i.e., gradually increase $a$ to approach $a^*$;\\
        {\bf{2:}} {\bf{if}} $\tilde{f}(p)$ decreases monotonically for $0<p<1$ {\bf{then}} \\
        ~~~~~return to step 1;\\
        ~~~{\bf{else}} \\
        ~~~~~go to step 3;\\
		{\bf{3:}} take the maximal local maximum of $\tilde{f}(p)$, which is denoted by $\tilde{f}_{\text {max}}$.\\
		~~~{\bf{if}} $\tilde{f}_{\text {max}}<0$ {\bf{then}} \\
		~~~~~return to step 1;\\
		~~~{\bf{else}} \\
		~~~~~go to step 4;\\
		{\bf{4:}} $a=a-\varepsilon$, i.e., back to the previous $a$ when $\tilde{f}_{\text {max}}>0$;\\
		{\bf{5:}} $\varepsilon=\varepsilon/10$, i.e., update $\varepsilon$ to determine the next digit of $a^*$;\\
        {\bf{6:}} {\bf{if}} $\varepsilon<\varepsilon^*$ {\bf{then}} \\
        ~~~~~return to step 1, i.e., continue to increase $a$ when $\varepsilon$ does not exceed \\the objective precision $\varepsilon^*$;\\
        ~~~{\bf{else}} \\
        ~~~~~go to step 7;\\
        {\bf{7:}} $a^*=a$, i.e., the $a$ at step 3 is exactly the optimal parameter $a^*$ if $\varepsilon$ \\has reached the objective precision $\varepsilon^*$.\\
        \bottomrule
    \end{tabular}
\end{table}

The derivative of (\ref{tilde_f}) with respect to $p$ is given by
\begin{align}
\label{f'p}
\tilde f'\left( p \right) = a{e^{ap}} - a - {\left( {1 - p} \right)^{ - 1}},
\end{align}
from which we have $\tilde f'\left( 0 \right) = -1$ and $\tilde f'\left( p \right) \to  - \infty$ if $p \to 1^-$. Besides, it is easy to know $\tilde f\left( 0 \right) = 0$ and $ \tilde f\left( p \right) \to  - \infty$ if $p \to 1^-$ from (\ref{tilde_f}). Therefore, if $\tilde f\left(p\right)$ has no extremum for $0<p<1$, then it is monotonically decreasing and obviously satisfies $\tilde f\left(p\right)<0$. In other cases, the local minimum and local maximum values of $\tilde f\left(p\right)$ for $0<p<1$ always appear in pairs, i.e., there is a local maximum after each local minimum along with the increase of $p$ from $0$ to $1$. To satisfy $\tilde f\left(p\right)<0$, all local maximums of $\tilde f\left(p\right)$ are required to be less than $0$.
The derivative of (\ref{tilde_f}) with respect to $a$ is given by
\begin{align}
\label{f'a}
	\tilde f'\left( a \right) = p\left( {{e^{ap}} - 1} \right),
\end{align}
which is obviously greater than $0$ for $a>0$ and $0<p\le 1$. Thus, each function value of $\tilde f\left(p\right)$ increases as $a$ increases. That is to say, if we find an ${a^*}$ that the corresponding maximum local maximum of $\tilde f\left(p\right)$ in $(0,1)$ infinitely approaches $0$, then for any $a>a^*$, it can not make $\tilde f\left(p\right) < 0$ always hold for $0<p\le1$.
Due to the transcendence of (\ref{tilde_f}), it is difficult to analytically obtain ${a^*}$. As an alternative method, we use Algorithm 1, the principle of which is to determine each digit of ${a^*}$ successively from high to low by continuously increasing $a$. Setting $\varepsilon^*=0.01$ in Algorithm 1, $a^*=1.73$ is obtained. Fig. 2 shows the comparison between $\tilde f\left(p\right)$ and $f\left(p\right)$.

\begin{thm}
For any positive integer $L$, consider the UN degree distribution for IRSA with K = 2 as
\begin{align}
	\label{Lambda1}
	\Lambda_1 (x) = \sum\limits_{s = 2}^{L + 1} {\frac{{{{1.73}^{s - 1}}/\left( {s!} \right)}}{{\sum\nolimits_{t = 1}^L {{{1.73}^t}/\left( {\left( {t + 1} \right)!} \right)} }}{x^s}}.
\end{align}
Then, the probability that one packet cannot be successfully decoded after SIC converges to $0$ if the load satisfies that
\begin{align}
\label{G_threshold}
G  \le \sum\limits_{t = 1}^L {\frac{{{1.73^t}}}{{(t + 1)!}}} .
\end{align}
\end{thm}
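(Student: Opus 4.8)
The plan is to bound the exact function $f(p)$ of (\ref{f_p}), built from $\Lambda_{1}$, pointwise by the surrogate $\tilde f(p)$ of (\ref{tilde_f}) taken at $a=a^{*}=1.73$, for which $\tilde f(p)<0$ on $(0,1)$ has already been established through Algorithm 1 and Fig.~2. Since the remark following (\ref{SICstop_K}) says the unresolved‑packet probability converges to $0$ exactly when $f(p)$ has no zero in $(0,1]$, it then suffices to get $f(p)<0$ on $(0,1)$ together with $f(p)\to-\infty$ as $p\to 1^{-}$, the latter being immediate from the $\ln(1-p)$ term and the finiteness of $G\Lambda_{1}'(1)$.

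First I would make the link between the truncated polynomial $\Lambda_{1}$ and the exponential surrogate explicit. Differentiating (\ref{Lambda1}) and substituting $k=s-1$ yields
\begin{equation*}
G\Lambda_{1}'(p)=\frac{G}{Z_{L}}\sum_{k=1}^{L}\frac{(1.73\,p)^{k}}{k!},\qquad Z_{L}:=\sum_{t=1}^{L}\frac{1.73^{\,t}}{(t+1)!},
\end{equation*}
so that $G\Lambda_{1}'(p)$ is $G/Z_{L}$ times a partial sum of $e^{1.73 p}-1=\sum_{k\ge 1}(1.73 p)^{k}/k!$. For $p>0$ every omitted term is positive, so this partial sum is at most $e^{1.73 p}-1$, and (\ref{G_threshold}) is precisely $G\le Z_{L}$, i.e.\ $G/Z_{L}\le 1$; hence $0\le G\Lambda_{1}'(p)\le e^{1.73 p}-1=g(p)$ for $0<p\le 1$. (Along the way one checks that $\Lambda_{1}$ is a bona fide UN degree distribution: $\Lambda_{1}(0)=0$, and after the same re‑indexing $\Lambda_{1}(1)=1$ by the choice of $Z_{L}$.)

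Next I would transfer this inequality through the structure of $f$. With $h(y):=y-\ln(1+y)$ one has $f(p)=h\!\left(G\Lambda_{1}'(p)\right)+\ln(1-p)$ and, since $1+g(p)=e^{1.73 p}$, also $\tilde f(p)=h\!\left(g(p)\right)+\ln(1-p)$. Because $h'(y)=y/(1+y)\ge 0$ for $y\ge 0$, $h$ is nondecreasing on $[0,\infty)$, so $0\le G\Lambda_{1}'(p)\le g(p)$ forces $f(p)\le\tilde f(p)$ for all $p\in(0,1]$. Combined with $\tilde f(p)<0$ on $(0,1)$ and $f(p)\to-\infty$ as $p\to 1^{-}$, this shows $f$ has no zero on $(0,1]$, which is the claim.

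I expect the load‑bearing step to be the monotonicity comparison via $h$: this is what lets the explicit polynomial $\Lambda_{1}$ inherit the negativity already verified for the transcendental $\tilde f$, and it is the reason $g(p)=e^{ap}-1$ (rather than some other curve) was the right approximation target for $G\Lambda'(p)$. A minor point is strictness in the borderline case $G=Z_{L}$ should the largest local maximum of $\tilde f$ only touch $0$; this is covered either because $a^{*}=1.73$ was selected so that this maximum is strictly negative, or because the truncation makes $G\Lambda_{1}'(p)<g(p)$ on a dense set, hence $f(p)<\tilde f(p)$ there.
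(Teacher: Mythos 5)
Your proposal is correct and follows essentially the same route as the paper's proof: bound $G\Lambda_1'(p)$ by the partial sum of $e^{1.73p}-1$ (using $G\le Z_L$ and the positivity of the omitted Taylor terms), push the inequality through the monotone map $h(y)=y-\ln(1+y)$ to get $f(p)\le\tilde f(p)$, and invoke $\tilde f(p)<0$ from the $a^*=1.73$ construction. Your version is marginally cleaner in writing both $f$ and $\tilde f$ directly as $h(\cdot)+\ln(1-p)$, and adds the harmless extra checks that $\Lambda_1$ is a valid distribution and that $f(p)\to-\infty$ as $p\to1^-$, but the substance is identical.
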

\begin{proof}
Let $\tilde \Lambda ^\prime_1 \left( p \right)$ be the $L$th-order Taylor expansion of $g\left( p \right) = \exp \left( {ap} \right) - 1$ at $p=0$ without the Lagrange remainder ${{\left( {{{\left( {ap} \right)}^{L + 1}}\exp \left( {a\xi } \right)} \right)} \mathord{\left/{\vphantom {{\left( {{{\left( {ap} \right)}^{L + 1}}\exp \left( {a\xi } \right)} \right)} {\left( {\left( {L + 1} \right)!} \right)}}} \right.\kern-\nulldelimiterspace} {\left( {\left( {L + 1} \right)!} \right)}}$, where $\xi \in \left( {0,p} \right)$ is a constant. For $a>0$ and $0<p\le1$, it is obviously that the Lagrange remainder is positive. Thus, we have $\tilde \Lambda ^\prime_1 \left( p \right) = \sum\nolimits_{s = 1}^L {{{{{\left( {ap} \right)}^s}} \mathord{\left/{\vphantom {{{{\left( {ap} \right)}^s}} {\left( {s!} \right)}}} \right.\kern-\nulldelimiterspace} {\left( {s!} \right)}}}  < g\left( p \right)$. In addition, It is easy to know that $h\left( x \right) = x - \ln \left( {1 + x} \right)$ increases monotonically for $x>0$. Therefore, $h\left( {g\left( p \right)} \right) > h\left( {{{\tilde \Lambda^\prime_1}}\left( p \right)} \right)$. On the other hand, by substituting (\ref{Lambda1}) and (\ref{G_threshold}) into (\ref{f_p}), we obtain
\begin{align}
\label{f_p<}
f\left(p\right) &= G{{\Lambda {'_1}\left( p \right)}} - \ln \left( 1 + G{{\Lambda {'_1}\left( p \right)}} \right) + \ln \left({1 - p}\right) \notag \\
& \le {\tilde \Lambda^\prime_1}\left( p \right) - \ln \left( {1 + {{\tilde \Lambda^\prime}_1}\left( p \right)} \right) + \ln \left({1 - p}\right),
\end{align}
where the inequality follows from setting $a = a^* = 1.73$ and $h\left( G{\Lambda {'_1}\left( p \right)} \right) \le h\left( {{{\tilde \Lambda^\prime}_1}\left( p \right)} \right)$. Based on (\ref{tilde_f}) and (\ref{f_p<}), we have
\begin{align}
\tilde f\left( p \right) - f\left( p \right) \ge  h\left( {g\left( p \right)} \right) - h\left( {{{\tilde \Lambda^\prime}_1}\left( p \right)} \right)>0,
\end{align}
that is, $\tilde f\left(p\right) > f\left( p \right)$. Since $a = a^*=1.73$ guarantees $\tilde f\left(p\right)< 0$, we can further obtain $f\left( p \right)< 0$. The proof completes.
\end{proof}
\begin{figure}
\begin{center}
\includegraphics[width=0.393\textwidth]{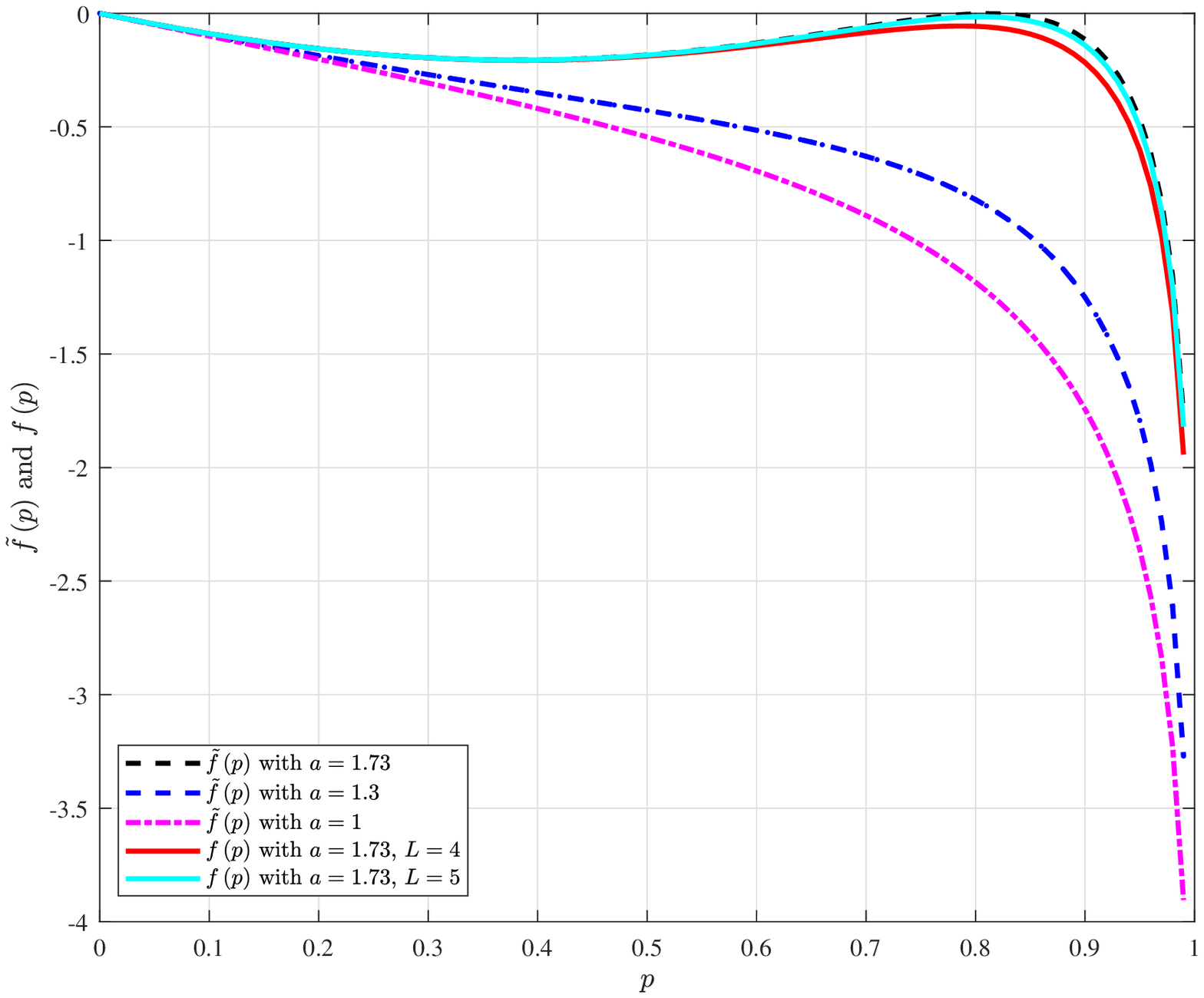}
\end{center}
\caption{$\tilde f\left(p\right)$ and $f\left(p\right)$.}
\end{figure}
Theorem 1 illustrates that $\Lambda_1 (x)$ enables the system load to achieve $G^* =1.68$. Due to the fast convergence of Taylor expansion $\tilde \Lambda ^\prime_1 \left( p \right)$ to $g\left(p\right)$, this load threshold can be reached when $L= 5$, thereby allowing a small average number of packet replicas ($2.74$). In addition, it should be pointed out that the increase in $G^*$ brought about by setting the precision $\varepsilon^*$ in Theorem 1 to be finer (e.g., $\varepsilon^*=0.001$) is negligible.

\subsection{Discussions for the Cases of $K\ge3$}
\allowdisplaybreaks
Two possible methods for deriving the UN degree distributions for $K \ge 3$ are presented below. The first is similar to the method of obtaining $\Lambda_1 (x)$. Taking the case of $K = 3$ as an example, we substitute $K = 3$ into (\ref{SICstop_K}) and have
\begin{align}
G\Lambda '(p) - \ln \left( {1 + G\Lambda '(p) + \frac{1}{2}{{\left( {G\Lambda '(p)} \right)}^2}} \right) = \ln \frac{1}{{1 - p}},
\end{align}
which has the same form as (\ref{SICstop}). Similar to the case of $K = 2$, let
\begin{align}
\label{K3}
1 + \tilde \Lambda '\left( p \right) + \frac{1}{2}{\left( {\tilde \Lambda '\left( p \right)} \right)^2} \approx \exp \left( {ap} \right),
\end{align}
where $\tilde \Lambda '\left( p \right) = G\Lambda '(p)$ and $a$ is a parameter. By expressing $\exp\left(ap\right)$ in Taylor expansion and taking the approximation in (\ref{K3}) as equality, we get
\begin{align}
{\left( {\tilde \Lambda '\left( p \right) + 1} \right)^2} = {p^0} + 2\sum\limits_{s = 1}^L  {\frac{{{a^s}}}{{s!}}} {p^s},
\end{align}
where $L$ is the order of the expansion. On the one hand, polynomial multiplication can be used to express linear convolution, thus we can think of ${p^0} + 2\sum\nolimits_{s = 1}^l {{{{{\left( {ap} \right)}^s}} \mathord{\left/
{\vphantom {{{{\left( {ap} \right)}^s}} {\left( {s!} \right)}}} \right.\kern-\nulldelimiterspace} {\left( {s!} \right)}}} $ as the convolution of $\tilde \Lambda '\left( p \right) + 1$ and itself. On the other hand, the circular convolution of two time domain sequences and the product of their corresponding frequency domain sequences are discrete Fourier transform pairs. Therefore, inversely transforming the square root of the discrete Fourier transform of ${p^0} + 2\sum\nolimits_{s = 1}^l {{{{{\left( {ap} \right)}^s}} \mathord{\left/{\vphantom {{{{\left( {ap} \right)}^s}} {\left( {s!} \right)}}} \right.\kern-\nulldelimiterspace} {\left( {s!} \right)}}} $ as the convolution of ˜$\tilde \Lambda '\left( p \right) + 1$ seems to be a feasible way to obtain $\tilde \Lambda '\left( p \right)$. Note that the first thing to be solved is the conversion problem between linear convolution and circular convolution.

The above method related to convolution is no longer applicable for $K \ge 4$. In fact, the larger the value of $K$, the more difficult it is to obtain $\Lambda (x)$. Another more universal method is to use optimization algorithms similar to Algorithm 1. For any $K$, we directly replace $G{\Lambda '\left( p \right)} $ with $\exp \left( {ap} \right) - 1$ and give $\tilde f(p)$ similar to (\ref{tilde_f}) as
\begin{align}
\tilde f(p) = {e^{ap}} - \ln \left( {\sum\limits_{k = 0}^{K - 1} {\frac{{{{\left( {{e^{ap}} - 1} \right)}^k}}}{{k!}}} } \right) + \ln \left( {{1 - p}} \right) - 1.
\end{align}
After studying the existence, $a^*$ that makes $\tilde f(p)$ the closest to the horizontal axis but dose not intersect with it can be found by algorithms. In fact, it is not necessary to use an exponential function such as $\exp \left( {ap} \right) - 1$. Although the fast convergence of the Taylor expansion is an advantage of exponential functions, the fixed function form may limit the performance of the UN degree distributions obtained.

\section{Energy Efficiency Optimization}
\allowdisplaybreaks
The energy efficiency optimization in terms of the maximum repetition rate $L$ (actually $L + 1)$ for $\Lambda_1\left(x\right)$ is presented in this section.
Let ${P_{\text {tx}}}$ and ${P_{\text c}}$ be the powers consumed by each user to transmit data packets or not in a time slot, respectively. In addition, the duration of each slot is assumed to be $1$. Thus, the average energy consumption is $E = \Lambda '(1){P_{\text {tx}}} + N{P_{\text c}}$, where $\Lambda '(1)$ denotes the average number of replicas transmitted by a user during one frame. Shannon capacity is used to approximate the information transmission rate. Note that, due to the repetition strategy of IRSA scheme, the number of valid packets belonging to each user in a frame is actually one. Correspondingly, the effective amount of data transmitted by each user in a frame per unit bandwidth is ${\log _2}\left( {1 + {{{P_{\text {tx}}}} \mathord{\left/{\vphantom {{{P_{\text {tx}}}} {{\sigma ^2}}}} \right.\kern-\nulldelimiterspace} {{\sigma ^2}}}} \right)$, where $\sigma ^2$ denotes the average noise power. Therefore, the energy efficiency is given by
\begin{align}
	\label{EE}
	\Gamma  = \frac{{{{\log }_2}\left( {1 + {{{P_\text {tx}}} \mathord{\left/
						{\vphantom {{{P_\text {tx}}} {{\sigma ^2}}}} \right.
						\kern-\nulldelimiterspace} {{\sigma ^2}}}} \right)}}{E} = \frac{{{{\log }_2}\left( {1 + {{{P_\text {tx}}} \mathord{\left/
						{\vphantom {{{P_\text {tx}}} {{\sigma ^2}}}} \right.
						\kern-\nulldelimiterspace} {{\sigma ^2}}}} \right)}}{{\Lambda '(1){P_\text {tx}} + N{P_\text c}}}.
\end{align}

\begin{table}
\centering
\setlength\tabcolsep{5.9pt}
\caption{The first $7$ $\Delta {A_L}/\left| {\Delta {B_L}} \right|$ values of the proposed distribution}
\begin{tabular}{cccccccc}
\toprule
$L$&$1$&$2$&$3$&$4$&$5$&$6$&$7$\\
\midrule
$\frac{{\Delta {A_L}}}{{\left| {\Delta {B_L}} \right|}}$&$0.8649$&$2.2298$&$3.8042$&$5.5065$&$7.0526$&$7.2$&$9$\\
\bottomrule
\end{tabular}
\end{table}

Assume that $M$ (the number of active users) is given and $G^*$ (the optimal load) is adopted. Based on (\ref{G}) and (\ref{G_threshold}), the number of slots is expressed as $N = {M \mathord{\left/{\vphantom {M {\sum\nolimits_{t = 1}^L {{{1.73}^t}/\left( {\left( {t + 1} \right)!} \right)} }}} \right.\kern-\nulldelimiterspace} {\sum\nolimits_{t = 1}^L {{{1.73}^t}/\left( {\left( {t + 1} \right)!} \right)} }}$. Furthermore, $\Lambda '(1)$ can be obtained via (\ref{Lambda1}). Thus, we obtain
\begin{align}
\label{E}
 E = {P_{\text {tx}}}\frac{{\sum\nolimits_{s = 2}^{L + 1} {{{1.73}^{s - 1}}/\left( {\left( {s - 1} \right)!} \right)}  + M{P_\text c}/{P_{\text {tx}}}}}{{\sum\nolimits_{t = 1}^L {{{1.73}^t}/\left( {\left( {t + 1} \right)!} \right)} }}.
\end{align}
In order to discuss the effect of $L$ on $E$, we write (\ref{E}) as ${E_L} = {P_{\text {tx}}}\left( {{A_L} + {B_L}\left( {{{M{P_{\text c}}} \mathord{\left/
{\vphantom {{M{P_{\text c}}} {{P_{\text {tx}}}}}} \right.
\kern-\nulldelimiterspace} {{P_{\text {tx}}}}}} \right)} \right)$. It can be found that $B_L$ decreases monotonically as $L$ increases. As for $A_L$, we have
\begin{align}
\label{A}
	{A_{L + 1}} - {A_L}  = \frac{{\frac{{{{1.73}^{L + 1}}}}{{\left( {L + 1} \right)!}}\left( {\sum\nolimits_{t = 1}^L {\frac{{{{1.73}^t}}}{{t!\left( {t + 1} \right)}}}  - \sum\nolimits_{s = 2}^{L + 1} {\frac{{{{1.73}^{s - 1}}}}{{\left( {s - 1} \right)!\left( {L + 2} \right)}}} } \right)}}{{\left( {\sum\nolimits_{t = 1}^{L + 1} {\frac{{{{1.73}^t}}}{{\left( {t + 1} \right)!}}} } \right)\left( {\sum\nolimits_{t = 1}^L {\frac{{{{1.73}^t}}}{{\left( {t + 1} \right)!}}} } \right)}},
\end{align}
which is greater than $0$. The reason is that ${t + 1}< {L + 2}$ in the numerator holds true for each items added. Thus $A_L$ increases monotonically as $L$ increases, which is the opposite of $B_L$. Therefore, the magnitude relationship between the changes of $A_L$ and $B_L$, denoted as $\Delta {A_L} = {A_{L + 1}} - {A_L}$ and $\Delta {B_L}={B_{L + 1}} - {B_L}$, affects the trend of $E$. Specifically, if ${{{\Delta {A_L}} \mathord{\left/
{\vphantom {{\Delta {A_L}} {\left| {\Delta {B_L}} \right|}}} \right.\kern-\nulldelimiterspace} {\left| {\Delta {B_L}} \right|}}}< {{{M{P_\text c}} \mathord{\left/{\vphantom {{M{P_\text c}} {{P_{\text {tx}}}}}} \right.\kern-\nulldelimiterspace} {{P_{\text {tx}}}}}}$, then ${E_{L+1}}<{E_L}$. Similarly, if  ${{{\Delta {A_L}} \mathord{\left/{\vphantom {{\Delta {A_L}} {\left| {\Delta {B_L}} \right|}}} \right.
\kern-\nulldelimiterspace} {\left| {\Delta {B_L}} \right|}}}>  {{{M{P_\text c}} \mathord{\left/{\vphantom {{M{P_\text c}} {{P_{\text {tx}}}}}} \right.\kern-\nulldelimiterspace} {{P_{\text {tx}}}}}} $, then ${E_{L+1}}>{E_L}$. Furthermore, the general expression of ${{{\Delta {A_L}} \mathord{\left/{\vphantom {{\Delta {A_L}} {\left| {\Delta {B_L}} \right|}}} \right.\kern-\nulldelimiterspace} {\left| {\Delta {B_L}} \right|}}}$ is given by
\begin{align}
\label{A/B}
\frac{{\Delta {A_L}}}{{\left| {\Delta {B_L}} \right|}} = \sum\limits_{i = 1}^L {\frac{{{{1.73}^i}\left( {L + 1 - i} \right)}}{{\left( {i + 1} \right)!}}}.
\end{align}
Since each term accumulated in (\ref{A/B}) is positive, ${{{\Delta {A_L}} \mathord{\left/{\vphantom {{\Delta {A_L}} {\left| {\Delta {B_L}} \right|}}} \right.\kern-\nulldelimiterspace} {\left| {\Delta {B_L}} \right|}}}$ increases monotonically with respect to $L$. Thus, for
\begin{align}
\label{A^*/B^*}
	\frac{{\Delta {A_{{L^*} - 1}}}}{{\left| {\Delta {B_{{L^*} - 1}}} \right|}} < \frac{{M{P_{\text c}}}}{{{P_{\text {tx}}}}} < \frac{{\Delta {A_{{L^*}}}}}{{\left| {\Delta {B_{{L^*}}}} \right|}},
\end{align}
we can know that $E_L$ decreases when $L<L^*$ and increases when $L>L^*$. That is to say, $L^*$ maximizes the energy efficiency  $\Gamma$ (i.e., minimizes the energy consumption $E$).

Substituting $L=1$ into (\ref{A/B}), we have ${{{\Delta {A_1}} \mathord{\left/{\vphantom {{\Delta {A_1}} {\left| {\Delta {B_1}} \right|}}} \right.\kern-\nulldelimiterspace} {\left| {\Delta {B_1}} \right|}}} = 0.8649$. Therefore, as long as ${{{M{P_\text c}} \mathord{\left/{\vphantom {{M{P_\text c}} {{P_{\text {tx}}}}}} \right.\kern-\nulldelimiterspace} {{P_{\text {tx}}}}}} > 0.8649$, the energy efficiency of the system always has a peak value, which can be obtained by setting $L = L^*$. We list the first few values of ${{\Delta {A_L}} \mathord{\left/{\vphantom {{\Delta {A_L}} {\left| {\Delta {B_L}} \right|}}} \right.\kern-\nulldelimiterspace} {\left| {\Delta {B_L}} \right|}}$ in Tab. \uppercase\expandafter{\romannumeral1} for reference.

\section{Numerical Results}
\allowdisplaybreaks
In this section, numerical results are presented to discuss the throughput and energy efficiency performance of the proposed optimal transmission probability distribution (i.e., ${\Lambda _1}\left( x \right)$ with $a=1.73$). Specifically, the results of the packet loss rate (PLR, defined as the probability that any packet replicas of a user cannot be successfully decoded) versus load and the energy efficiency versus maximum repetition rate are provided.
\begin{figure}
\begin{center}
\includegraphics[width=0.393\textwidth]{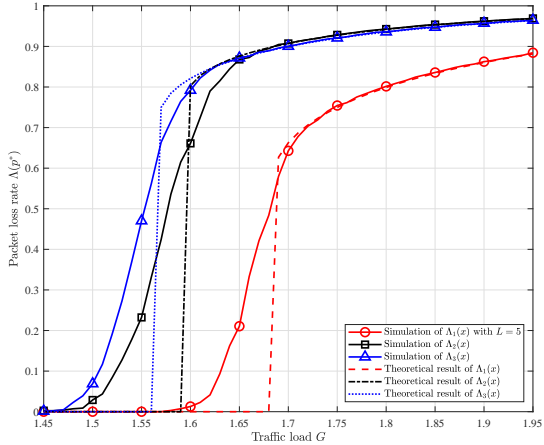}
\end{center}
\caption{Packet loss rate versus traffic load for IRSA with $K=2$. ${\Lambda _2}\left( x \right) = 0.5{x^2} + 0.28{x^3} + 0.22{x^8}$ and ${\Lambda _3}\left( x \right) = 0.25{x^2} + 0.60{x^3} + 0.15{x^8}$.}
\end{figure}

Fig. 3 shows the PLR versus traffic load for IRSA with $K=2$. Let $p^*$ be the largest root of (\ref{SICstop}), then PLR can be obtained by $\Lambda\left(p^*\right)$. The number of users is fixed to $M=1000$ and the load $G$ is changed by adjusting the number of time slots in a frame. In addition to the derived ${\Lambda _1}\left( x \right)$, the two distributions ${\Lambda _2}\left( x \right)$ and ${\Lambda _3}\left( x \right)$ introduced in \cite{IRSA} are also presented as references. It can be seen that for the case of $K=2$, ${\Lambda _1}\left( x \right)$ can achieve a significantly higher load threshold and better robustness. Since the theoretical results are obtained by solving (\ref{SICstop}) based on the asymptotic model (where $M,N \to \infty$), there is a gap with the simulation results.

\begin{figure}
\begin{center}
\includegraphics[width=0.393\textwidth]{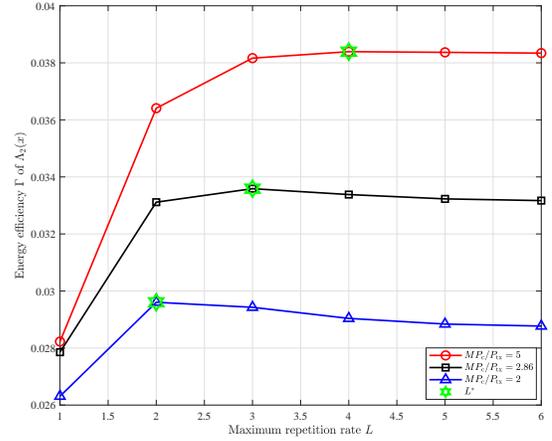}
\end{center}
\caption{Energy efficiency versus maximum repetition rate for ${\Lambda _1}\left( x \right)$.}
\end{figure}

Further setting ${\sigma ^2} = 1$ and $P_{\text c}=0.1$, Fig. 4 shows the energy efficiencies versus maximum repetition rate for ${\Lambda _1}\left( x \right)$. We select the three $P_{\text {tx}}$ values of $20$, $35$, and $50$ to get different ${{M{P_\text c}} \mathord{\left/{\vphantom {{M{P_\text c}} {{P_{\text {tx}}}}}} \right.\kern-\nulldelimiterspace} {{P_{\text {tx}}}}}$. For ${M{P_\text c}/{P_{\text {tx}}}} = 2$, $2.86$, and $5$, the corresponding values of $L^*$ are $2$, $3$, and $4$, respectively. The same results can also be obtained by directly referring to Tab \uppercase\expandafter{\romannumeral1}. Furthermore, it can be seen from Fig. 4 that although it still exists, the peak value of the curve becomes less obvious as the transmit power decreases. This is because the decrease of $P_{\text {tx}}$ means the increase of $L^*$, and the fast convergence of (\ref{E}) makes the energy consumption change very slowly after $L>4$.

\section{Conclusion}
\allowdisplaybreaks

An optimal transmission probability distribution was analytically derived for IRSA with a multi-packet reception capability of $2$, as $K=2$ is promising in achieving the highest time-frequency  resource efficiency in multi-packet reception scenarios. In addition to the traceable expression, the derived transmission distribution also exhibits a better performance than the existing benchmark distributions in terms of the throughput. To achieve a high energy efficiency, the optimal repetition rate is also found based to the derived transmission distribution. It is shown that using the optimized repetition rate parameter, the energy efficiency can be improved significantly.

\ifCLASSOPTIONcaptionsoff
  \newpage
\fi



%
\bibliographystyle{IEEEtran}
\bibliography{K=2}

\end{document}